\newtheorem{definition}{Definition}
\newtheorem{theorem}{Theorem}
\newtheorem{lemma}{Lemma}
\newtheorem{remark}{Remark}
\newtheorem{example}{Example}
\newcommand{\A}{\mathbb{A}}
\newcommand{\Q}{\mathbb{Q}}
\renewcommand{\P}{\mathbb{P}}
\newcommand{\F}{\mathbb{F}}
\newcommand{\Z}{\mathbb{Z}}
\newcommand{\N}{\mathbb{N}}
\renewcommand{\L}{\mathbb{L}}
\newcommand{\ang}[1]{\{#1\}}
\newcommand{\frakp}{\mathfrak{p}}
\newcommand{\softO}{O\tilde{~}}
\newcommand{\sring}{\mathbb{L}\{ \tau \}}
\newcommand{\charpoly}{\textnormal{CharPoly}}
\newcommand{\minpoly}{\textnormal{MinPoly}}
\newcommand{\cA}{\mathcal{A}} \newcommand{\cC}{\mathcal{C}}
\newcommand{\cB}{\mathcal{B}}
\newcommand{\ehat}{\hat{\eta}} \newcommand{\emorph}{\textnormal{End}}
\newcommand{\frakl}{\mathfrak{l}}
\newcommand{\WL}{W}
\newcommand{\hcrys}{H^{*}_{{\rm crys}}(\phi, \L)}
\newcommand{\hcrysk}{H_{k}^*(\phi, \L)} 
\newcommand{\var}{y}
\newcommand{\vart}{t}
\begin{document}

\copyrightyear{2023}
\acmYear{2023}
\acmConference[ISSAC]{International Symposium on Symbolic and Algebraic Computation}{2023}{}

\title{Computing the Characteristic
  Polynomial of Endomorphisms of a finite Drinfeld Module using
  Crystalline Cohomology}

\author{Yossef Musleh}
\affiliation{\institution{Cheriton School of Computer Science \\ University of Waterloo}
  \city{Waterloo} \state{Ontario} \country{Canada}}
\email{ymusleh@uwaterloo.ca}

\author{\'Eric Schost}
\affiliation{\institution{Cheriton School of  Computer Science \\ University of Waterloo}
  \city{Waterloo} \state{Ontario} \country{Canada}}
\email{eschost@uwaterloo.ca}

\begin{abstract}
  We present a new algorithm for computing the characteristic
  polynomial of an arbitrary endomorphism of a finite Drinfeld module
  using its associated crystalline cohomology. Our approach takes
  inspiration from Kedlaya's $p$-adic algorithm for computing the
  characteristic polynomial of the Frobenius endomorphism on a
  hyperelliptic curve using Monsky-Washnitzer cohomology. The method
  is specialized using a baby-step giant-step algorithm for the
  particular case of the Frobenius endomorphism, and in this case we
  include a complexity analysis that demonstrates asymptotic gains
  over previously existing approaches. 
\end{abstract}

\begin{CCSXML}
<ccs2012> <concept>
<concept_id>10010147.10010148.10010149</concept_id>
<concept_desc>Computing methodologies~Symbolic and algebraic
algorithms</concept_desc>
<concept_significance>500</concept_significance> </concept> </ccs2012>
\end{CCSXML}

\ccsdesc[500]{Computing methodologies~Symbolic and algebraic algorithms}

\keywords{Drinfeld module; algorithms; complexity.}


\maketitle


\section{Introduction}
Drinfeld modules were first introduced by Vladimir Drinfel'd in order
to prove the Langlands conjecture for $\textnormal{GL}_n$ over a
global function field~\cite{Drinfeld74}. Since then, Drinfeld modules
have attracted attention due to the well established correspondence
between elliptic curves and the rank two case. Moreover, the rank one
case, often referred to as \textit{Carlitz modules}, provides a
function field analogy of cyclotomic extensions; the role played in
class field theory over number fields by elliptic curves with complex
multiplication shows strong parallels with that of Drinfeld modules of
rank two for the function field setting.  This has motivated efforts
to translate constructions and algorithms for elliptic curves,
including modular polynomials~\cite{Caranay2020ComputingMP},
isogenies~\cite{Caranay2020ComputingMP}, and endomorphism
rings~\cite{endomorphismpink,garaipap}.

Naturally, cryptographic applications of Drinfeld modules have also
been explored~\cite{crsaction}, but were long anticipated to be
vulnerable for public key cryptography based on
isogenies~\cite{Scanlon01,Joux2019DrinfeldMA}. This question was
finally put to rest by Wesolowski who showed that isogenies between
Drinfeld modules of any rank could be computed in polynomial
time~\cite{wesolowski}.

Drinfeld modules of rank $r > 2$ do not have such a clear parallel,
although an analogy exists between abelian surfaces and so called
$t$-modules~\cite{Anderson86}. Owing to this discrepancy, rank two
Drinfeld modules have been studied far more closely than the case of
more general ranks.

The main goal of this work is to study a Drinfeld module analogue of
$p$-adic techniques such as Kedlaya's
algorithm~\cite{kedlayapointcounting} for computing the
characteristic polynomial of the Frobenius endomorphism acting on an
elliptic or hyperelliptic curve over a finite field. Algorithms for
elliptic curves compute the action of the Frobenius on a basis of a
particular subspace of the de Rham cohomology of a characteristic 0
lift of the curve, with coefficients in $\Q_p$. Our approach follows a
very similar outline, but turns out to be remarkably simpler to
describe, resting crucially on a suitable version of crystalline
cohomology for Drinfeld modules due Gekeler and
Angl\`es~\cite{Angles1997}.

More generally, the approach we present can be used to compute the
characteristic polynomial of any Drinfeld module endomorphism.


\section{Background and Main result}


\subsection{Basic Preliminaries}

Let $R$ be any ring, $r \in R$, and $\sigma: R \to R'$ a ring
homomorphism. We will follow the notational convention that writes
$\sigma(r) = \sigma_r = r^{\sigma}$ throughout this work. If $R$ is a
polynomial ring and $\sigma$ acts on its coefficient ring,
$r^{\sigma}$ denotes coefficient-wise application.

Let $q$ be a prime power, and let $\F_q$ denote a finite field of
order $q$, fixed throughout. We also fix a field extension $\L$ of
$\F_q$ such that $[\L: \F_q] = n$. Explicitly, $\L$ is defined as $\L
= \F_q[t]/(\ell(t))$ for some degree $n$ irreducible $\ell(t) \in
\F_q[t]$, so elements of $\L$ are represented as polynomials in
$\F_q[t]$ of degree less than $n$. We will discuss below an
alternative representation, better suited for some computations.


\subsection{Drinfeld Modules}

In general, Drinfeld modules can be defined over a ring $A$ consisting
of the functions of a projective curve over $\F_q$ that are regular
outside of a fixed place at infinity. For our purposes, we will
restrict ourselves to the consideration of Drinfeld modules defined
over the regular function ring of $\P^1 - \{\infty\}$; that is $A =
\F_q[x]$.

We fix a ring homomorphism $\gamma: A \to \L$ and let $\frakp \in A$
the monic irreducible generator of $\ker \gamma$. Then $\F_{\frakp} =
\F_q[x]/(\frakp)$ is isomorphic to a subfield of $\L$; we let
$m=\deg(\frakp)$, so that $m$ divides $n$. This gives us an
isomorphism $\L\simeq\F_q[x,\vart]/(\frakp(x), g(x, \vart)) $, with
$g$ monic of degree $n/m$ in $\vart$. It will on occasion be
convenient to switch from the representation of elements of $\L$ as
univariate polynomials in $\vart$ to the corresponding bivariate
representation in $x,\vart$; in that case, for instance, $\gamma_x$ is
simply the residue class of $x$ modulo $(\frakp(x), g(x, \vart))$. We
assume that $\frakp$ and $g$ are given as part of the input.
  
To define Drinfeld modules, we also have to introduce the ring
$\L\ang{\tau}$ of skew polynomials, namely
\begin{align*}
\L\ang{\tau} &= \{U=u_0 + u_1 \tau + \cdots + u_s \tau^s \ \mid \ s
\in \N, u_0,\dots,u_s \in \L\},
\end{align*}
where multiplication is induced by the relation $\tau u = u^q \tau$,
for all $u$ in $\L$.

\begin{definition}
 A \textnormal{Drinfeld $A$-module} of rank r over over $(\L, \gamma)$
 is a ring homomorphism $\phi: A \to \L\{ \tau \}$ such that
 \begin{equation*}
     \phi_x = \gamma_x + \Delta_1\tau^1 + \ldots + \Delta_r\tau^r
 \end{equation*}
  with $\Delta_i$ in $\L$ for all $i$ and $\Delta_r \neq 0$.
 \end{definition}

For readers interested in the more general setting under which
Drinfeld modules are typically defined, we recommend the survey by
Deligne and Husem\"{o}ller in~\cite{deligne}.

A Drinfeld module is defined over the \textit{prime field} when $\L
\cong \F_{\frakp}$ (that is, $m=n$). Algorithms for Drinfeld modules
in the prime field case tend to be algorithmically simpler, and we
will often highlight the distinction with the more general case.

 \begin{example}
     Let $\F_q = \Z/5\Z$ and $n = 4$. Set $\ell(t) = t^4 + tx^2 + 4t +
     2$ and $\L = \F_5[t]/(\ell(t))$. Let $\gamma_x = t \bmod
     \ell(t)$, in which case $\L = \F_{\frakp}$. A rank two Drinfeld
     module is given by $\phi_x = \tau^2 + \tau + t$.

     We may instead take $\gamma_x = t^3 + t^2 + t + 3 \bmod \ell(t)$
     in which case $\frakp = x^2 + 4x + 2$ and $\F_{\frakp} \cong
     \F_{25}$. The field $\L$ admits the representations
     $$\L=\F_5[t]/(\ell(t)) \simeq \F_5[x,t]/(\frakp(x),g(x,t)),$$
     with $g(x,t) = t^2 + 4tx + 3t + x$. A rank three Drinfeld module
     is given by $\phi_x = \tau^3 + (t^3 + 1)\tau^2 + t \tau + t^3 +
     t^2 + t + 3$.
 \end{example}
 
Given Drinfeld $A$-modules $\phi, \psi$ defined over $(\L, \gamma)$,
an $\L$-morphism $u: \phi \to \psi$ is a $u \in \L\{ \tau \}$ such
that $u\phi_a = \psi_au $ for all $a \in A$. The set
$\emorph_{\L}(\phi)$ is the set of $\L$-morphisms $\phi \to \phi$; it
is therefore the centralizer of $\phi_x$ in $\L\{\tau\}$.  It
admits a natural ring structure, and  contains the
\textit{Frobenius endomorphism} $\tau^n$.


\subsection{Characteristic Polynomials}\label{ssec:charpoly}

The characteristic polynomial of an endomorphism $u \in
\emorph_{\L}(\phi)$ can be defined through several points of view.

Through the action of $\phi$, $A=\F_q[x]$ and its fraction field
$K=\F_q(x)$ can be seen as a subring, resp. subfield of the skew field
of fractions $\L(\tau)$ of $\L\{\tau\}$. Then, $\emorph_{\L}^0(\phi)=
\emorph_{\L}(\phi) \otimes_{A} K$ is the centralizer of $\phi_x$ in
$\L(\tau)$; this is a division ring that contains $K$ in its center.

\begin{definition}
  The \textit{characteristic polynomial} $\charpoly(u)$ of $u \in
  \emorph_{\L}(\phi)$ is its reduced characteristic polynomial,
  relative to the subfield $K$ of
  $\emorph_{\L}^0(\phi)$~\cite[Section~9.13]{Reiner03}.
\end{definition}

The characteristic polynomial of $u$ has degree $r$ and coefficients
in $A \subset K$, so that it belongs to $A[Z]$. More precisely, if
$\deg(u) = d$, $\charpoly(u)$ has coefficients $a_0, \ldots, a_{r-1}
\in A$ with $\deg(a_i) \leq {d(r - i)}/r$ for all
$i$~\cite[Prop.~4.3]{endomorphismpink} and satisfies
\begin{equation}
    u^r + \sum_{i=0}^{r-1} \phi_{a_i}u^i = 0.
\end{equation}

Another definition of $\charpoly(u)$ follows from the introduction of
the {\em Tate modules} of $\phi$. The Drinfeld module $\phi$ induces
an $A$-module structure on the algebraic closure $\overline{\L}$ of
$\L$ by setting $a * c = \phi_a(c)$ for $a \in A$, $c \in
\overline{\L}$ (defining $\tau^i(c)=c^{q^i}$). Then, for $\frakl \in
A$, the $\frakl$-torsion module of $\phi$ is defined as $\phi[\frakl]
= \{ c \in \overline{\L} \mid \frakl *c = 0 \}$. Setting $\frakl$ to
be any irreducible element of $A$ different from $\frakp$, we can
define the $\frakl$-adic Tate module as $T_{\frakl}(\phi) =\varprojlim
\phi[\frakl^i]$.

Letting $A_{\frakl}$ be the $\frakl$-adic completion of $A$,
$T_{\frakl}(\phi)$ becomes a free $A_{\frakl}$-module of rank $r$ and
elements of $\emorph_{\L}(\phi)$ induce endomorphisms on
$T_{\frakl}(\phi)$. Then, for $u \in \emorph_{\L}(\phi)$, the
characteristic polynomial $\charpoly_{A_{\frakl}}(u)$ of the induced
endomorphism $u \in \emorph_{A_{\frakl}}(T_{\frakl}(\phi))$ agrees
with $\charpoly(u)$~\cite{Gekeler91,Angles1997}.

\begin{example}
    Let $\F_q$, $\L$ be as in the context of example 1, and $\gamma_x
    = t^3 + 4t^2 + t + 1 \bmod \ell(t)$.
    A rank 5 Drinfeld module is given by $\phi_x = (4t^3 + t^2 +
    2)\tau^5 + (t^3 + 3t^2 + t + 1)\tau^4 + (4t + 3)\tau^3 + (3t^2 +
    4t + 4)\tau^2 + (4t^3 + 4t^2 + 4t)\tau + \gamma_x$.

    The characteristic polynomial of $\tau^n$ on $\phi$ is $Z^5 + 3Z^4
    + (x^3 + 4x^2 + x)Z^3 + (2x^2 + 4x + 3)Z^2 + (x^3 + 2x^2 + 4x +
    2)Z$ $ + 2x^4 + 3x^2 + 4x + 2$
\end{example}

The results in this paper are based on another interpretation of
$\charpoly(u)$, as the characteristic polynomial of the endomorphism
induced by $u$ in a certain {\em crystalline cohomology} module, due
to Gekeler and Angl\`es~\cite{Angles1997}. Our first main result is an
algorithm for computing the characteristic polynomial of the Frobenius
endomorphism.

Here, $\omega$ is a real number such that two $s \times s$ matrices
over a ring $R$ can be multiplied in $O(s^{\omega})$ ring operations
in $R$; the current best value is $\omega \leq 2.372$~\cite{DuWuZh22}.
We will also let $\lambda$ denote an exponent such that the
characteristic polynomial of an $s \times s$ matrix over a ring $R$
can be computed in $O(s^{\lambda})$ ring operations in $R$. When $R$
is a field, this can be done at the cost of matrix multiplication and
therefore $\lambda = \omega$~\cite{charpolycomp}. For more general
rings, the best known value to date is $\lambda \approx
2.7$~\cite{KaVi04}.

\begin{theorem}\label{mainresult}
  Let $\phi$ be a rank $r$ Drinfeld module over $(\L, \gamma)$. There
  is a deterministic algorithm to compute the characteristic
  polynomial of the Frobenius endomorphism $\tau^n$ with bit
  complexity
  \begin{itemize} 
      \item $(r^\omega n^{1.5} \log q + n \log^2 q)^{1+o(1)}$ for the prime field
        case ($m = n$)
      \item $((r^{\lambda}/m + r^\omega/\sqrt{m})n^2\log q + n \log^2 q)^{1+o(1)}$ for
        the general case $m < n$.
  \end{itemize}
\end{theorem}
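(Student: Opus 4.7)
The plan is to reduce the computation to linear algebra on the rank-$r$ crystalline cohomology module $\hcrys$ of Gekeler and Angl\`es, whose Frobenius characteristic polynomial coincides with $\charpoly(\tau^n)$. After fixing an $\L$-basis of $\hcrys$, write the action of $\tau$ as a matrix $T \in \L^{r \times r}$. Since $\tau$ is $\sigma$-semi-linear with $\sigma$ the $q$-th power Frobenius on $\L$, the matrix of $\tau^n$ as a genuinely $\L$-linear endomorphism is the twisted product
\[
T_n \;=\; T \cdot T^{\sigma} \cdot T^{\sigma^{2}} \cdots T^{\sigma^{n-1}},
\]
so that $\charpoly(\tau^n)$ is the ordinary characteristic polynomial of $T_n$.

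For the prime-field case ($m=n$), I would apply a baby-step/giant-step schedule. Setting $s = \lceil \sqrt{n}\,\rceil$, first build $T_s = T \cdot T^\sigma \cdots T^{\sigma^{s-1}}$ by $s$ consecutive semi-linear matrix products; then assemble $T_n$ from the $\lceil n/s\rceil$ twists $T_s^{\sigma^{js}}$ by another $\lceil n/s \rceil$ products. This uses $O(\sqrt{n})$ multiplications of $r \times r$ matrices over $\L$, costing $r^\omega$ operations in $\L$ each, at $\tilde O(n\log q)$ bits per operation, which totals the $r^\omega n^{1.5}\log q$ term. The iterated Frobenius twists $M \mapsto M^{\sigma^k}$ are handled by first computing $\vart^{q^k}$ modulo $\ell(\vart)$ via repeated squaring-by-composition, contributing the $n \log^2 q$ setup, and then applying the resulting endomorphism to each of the $r^2$ entries by fast modular composition. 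The final $\charpoly(T_n)$ needs only $O(r^\omega)$ further operations in the field $\L$.

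For the general case $m<n$, the bivariate representation $\L \simeq \F_q[x,\vart]/(\frakp(x),g(x,\vart))$ realizes $\L$ as a degree-$n/m$ extension of $\F_\frakp$, and $\sigma^m$ fixes $\F_\frakp$. One can therefore reorganize the baby-step/giant-step around $\tau^m$, which is $\sigma^m$-semi-linear, with step $\sqrt{n/m}$; this is what produces the $r^\omega/\sqrt{m}$ factor after accounting for $\L$-arithmetic. Because $\charpoly(\tau^n)$ lies in $A[Z]=\F_q[x][Z]$ and must be recovered with its full $x$-dependence, the final extraction is performed at the level of a module over $\F_\frakp[x]$ rather than the field $\L$, contributing the $r^\lambda/m$ term (with the $1/m$ saving reflecting the drop from $\L$-arithmetic to $\F_\frakp$-arithmetic).

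The main obstacle throughout will be implementing the semi-linear twists $T \mapsto T^{\sigma^j}$ within the advertised budget: a naive entry-wise $q$-power would already blow past the target, so fast modular composition is essential, and in the general case a bivariate analogue compatible with both $\frakp(x)$ and $g(x,\vart)$ must be set up to exploit the fact that $\sigma^m$ fixes $x$. A secondary technical point is arranging the general-case characteristic polynomial extraction so that the $\lambda$ exponent appears only with the $1/m$ saving, which forces a careful choice of intermediate representation between $\L$ and $\F_\frakp$.
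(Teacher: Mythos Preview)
Your high-level plan---compute the matrix of $\tau^n$ on the cohomology, then take its characteristic polynomial---is the right one, and your baby-step/giant-step scheduling is on target. But the factorization $T_n = T\cdot T^{\sigma}\cdots T^{\sigma^{n-1}}$ over $\L$ is not valid, and this is where the real content of the proof lies.

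The point is that $\hcrys$ is not an $\L$-vector space but a module over $W=\L[[\var-\gamma_x]]$ (truncated to $W_k=\L[\var]/(\var-\gamma_x)^k$); the extra variable $\var$ records the $A=\F_q[x]$-action, and without it you cannot recover $\charpoly(\tau^n)\in A[Z]$ from its image in $\L[Z]$. Concretely, the shift ``multiply by $\tau$'' does \emph{not} descend to a $\sigma$-semilinear map on $H^*_k$: left multiplication by $\tau$ sends the ideal $(\var-\gamma_x)^k$ to $(\var-\gamma_x^{q})^k$, which is a different ideal. At the level of matrices, the recurrence gives companion matrices $\bar\cA_t\in W_k^{r\times r}$ whose upper-right entry is $\Lambda_0^{[t]}+\gamma_x/\Delta_r^{[t]}$ (after setting $\var=\gamma_x$ when $k=1$); since $\gamma_x$ is \emph{not} raised to the $q^t$-th power here, $\bar\cA_t\neq\bar\cB^{[t]}$ for any fixed $\bar\cB\in\L^{r\times r}$, so the twisted-product formula you wrote down fails already for $k=1$.

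The paper's fix is to keep the variable $\var$: over $\L[\var]$ one does have $\cA_t=\cB^{[t]}$ for a fixed degree-one matrix $\cB\in\L[\var]^{r\times r}$, because the Frobenius acts only on the $\L$-coefficients and fixes $\var$. The baby step then computes the polynomial matrix $\cC=\cB^{[n^*+n_0]}\cdots\cB^{[n_0+1]}\in\L[\var]^{r\times r}$ of degree $n^*\approx\sqrt{nk}$ via a subproduct tree, and the giant step produces each $\cC^{[jn^*]}\bmod\mu$ using the commutation rule $f^{[t]}\bmod\mu=(f\bmod\mu^{[-t]})^{[t]}$, which is exactly the non-obvious step your proposal is missing. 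In the general case $m<n$, the mechanism that yields the $1/\sqrt{m}$ and $1/m$ savings is not a reorganization around $\tau^m$ but the choice of truncation precision $k=\lceil(n+1)/m\rceil$: the baby-step length becomes $\sqrt{nk}\in O(n/\sqrt m)$, and the final characteristic polynomial is taken over the ring $W_k$ of $\F_q$-dimension $kn\in O(n^2/m)$, which is what produces the $r^\lambda n^2/m$ term.
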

When $r$ and $q$ are fixed, the runtime in the theorem is thus
essentially linear in $n^2/\sqrt{m}$, which is $n^{1.5}$ in the prime
field case and gets progressively closer to $n^2$ as $m$
decreases. The best prior results~\cite{MuslehSchost} were limited to
the case $r=2$, with runtimes essentially linear in $n^{1.5}$ in the
prime field case and $n^2$ otherwise (for fixed $q$).

This first algorithm builds upon techniques for linear recurrences
originating from~\cite{DOLISKANI2021199}, which are so far limited to
the particular case of the Frobenius endomorphism.

We also obtain two algorithms that can be applied to any $u \in
\emorph_{\L}(\phi)$. The complexity in this case partly depends on
that of multiplication and Euclidean division in $\L\{\tau\}$, which
we will denote ${\sf SM}(d,n,q)$ and which will be discussed in more
detail in Section~\ref{sec:prelim}.

\begin{theorem}\label{mainresult2}
  With assumptions as in Theorem \ref{mainresult}, there are
  deterministic algorithms to compute the characteristic polynomial of
  an endomorphism $u$ of degree $d$ with bit complexities
  \begin{itemize}
      \item $\big( \frac{\min(dr^2, (d+r)r^{\omega - 1})}{m} (d+m)n \log q + r^{\lambda} n(d+m)/m\log q + n \log^2 q \big)^{1+o(1)}$
      \item $(r{\sf SM}(d+r,n, q) + r^{\lambda} n(d+m)/m\log q + n \log^2 q)^{1+o(1)} $.
  \end{itemize}
\end{theorem}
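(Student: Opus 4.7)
The plan is to exploit the crystalline cohomology interpretation of $\charpoly(u)$: by the result of Gekeler and Angl\`es~\cite{Angles1997}, $\charpoly(u)$ equals the characteristic polynomial of the matrix $U$ representing $u$ on a basis of the free rank-$r$ module $\hcrys$. It thus suffices to (i) build $U$ and (ii) compute its characteristic polynomial over the appropriate coefficient ring, and the two bounds will arise from two different ways of carrying out step (i).

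For step (i), I would take as basis of $\hcrys$ the classes of $\{1,\tau,\ldots,\tau^{r-1}\} \subset \L\{\tau\}$ modulo the relation induced by $\phi_x - x$, so that the $i$-th column of $U$ is the reduction of $u\tau^i$ on this basis. In the \textbf{skew-polynomial variant}, for each $i \in \{0,\ldots,r-1\}$ I would compute the product $u\tau^i \in \L\{\tau\}$ (of degree at most $d+r-1$) and perform right Euclidean division by $\phi_x - x$ (of degree $r$ in $\tau$); each such operation costs ${\sf SM}(d+r,n,q)$, and the $r$ of them together produce the dominant term $r\cdot{\sf SM}(d+r,n,q)$ of the second bound. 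In the \textbf{linear-algebra variant}, I would first precompute the basis-expansions of $\tau^j \bmod (\phi_x - x)$ for $j = 0,\ldots,d+r-1$, arranged as the columns of an $r \times (d+r)$ matrix $T$ over the split representation $\F_q[x,\vart]/(\frakp(x),g(x,\vart))$ of $\L$; each column of $U$ is then $T$ applied to the coefficient vector of $u\tau^i$, and batching over $i$ produces a single $r \times (d+r)$ by $(d+r) \times r$ matrix product, done in $O(dr^2)$ operations by naive multiplication or in $O((d+r)r^{\omega-1})$ by rectangular fast matrix multiplication, whence the minimum. The factor $1/m$ reflects that arithmetic in the split representation decomposes across $\F_\frakp$.

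For step (ii), the matrix $U$ has entries in a quotient of $A$ whose elements have bit-size $\softO(n(d+m)/m\log q)$, reflecting the degree bound $\deg(a_i) \le d(r-i)/r$ on the coefficients of $\charpoly(u)$ recalled in Section~\ref{ssec:charpoly}. Applying the Kaltofen-Villard characteristic polynomial algorithm in $O(r^\lambda)$ ring operations~\cite{KaVi04} then contributes the term $r^\lambda n(d+m)/m\log q$. The low-order $n\log^2 q$ summand covers the one-time cost of switching between the univariate and split representations of $\L$ at the start of the computation and other base-field precomputations.

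The principal obstacle I anticipate is the careful bookkeeping for the linear-algebra variant: showing that the savings factor $1/m$ genuinely materializes through a faithful reformulation of the matrix arithmetic over $\F_\frakp$, and that the intermediate reductions $\tau^j \bmod (\phi_x - x)$ can be precomputed within the stated budget so that they do not themselves dominate. A secondary point is verifying that the crystalline identification of $\charpoly(u)$ with the characteristic polynomial of $U$, which the excerpt states for the Frobenius and for Tate modules, transfers uniformly to an arbitrary endomorphism $u$ and to the basis chosen above.
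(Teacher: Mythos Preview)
Your high-level outline matches the paper's: exploit the crystalline description, build the $r\times r$ matrix of $u$, and take its characteristic polynomial. But two concrete steps are misidentified, and one of them is the source of the very exponent you are trying to explain.

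\medskip
\textbf{The coefficient ring and the factor $(d+m)/m$.} You build the matrix $T$ ``over the split representation $\F_q[x,\vart]/(\frakp(x),g(x,\vart))$ of $\L$'', which is $\L$ itself, i.e.\ precision $k=1$ (de~Rham cohomology). That is not enough: at precision~$1$ the characteristic polynomial is only determined modulo $\frakp$, whereas the coefficients $a_i$ of $\charpoly(u)$ have degree up to $d$. The paper therefore works over the truncated ring
\[
W_k=\L[\var]/(\var-\gamma_x)^k,\qquad k=\Big\lceil\frac{d+1}{m}\Big\rceil\in O\!\left(\frac{d+m}{m}\right),
\]
and only after computing $\charpoly_{W_k}(u_k^\star)$ applies the map $\chi_k:W_k\to\F_q[\var]/(\frakp^k)$ to read off the $a_i$. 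The factor $(d+m)/m$ in the first bound is exactly this~$k$: each operation in $W_k$ costs $k$ operations in $\L$, hence $(kn\log q)^{1+o(1)}$ bit operations. It does \emph{not} come from any decomposition of $\L$-arithmetic across $\F_\frakp$; no such speed-up is used or claimed.

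\medskip
\textbf{The skew-division variant.} A single right Euclidean division of $\tau^i u$ by ``$\phi_x-x$'' does not produce its coordinates on the basis. What is needed is the expansion $f=\sum_j f_j\,\phi_x^{\,j}$ with $\deg_\tau f_j<r$, which encodes the $\L[\var]$-coordinates (since $\var$ acts by right multiplication by $\phi_x$). The paper obtains this by a divide-and-conquer scheme: precompute $\phi_x^{2},\phi_x^{4},\dots$, then recursively split $f$ via right division by $\phi_x^{K/2}$. That recursion, not a single division, is what yields the $\softO({\sf SM}(d+r,n,q))$ cost per basis vector and hence the $r\,{\sf SM}(d+r,n,q)$ term.

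\medskip
A minor point: the induced action is $v\mapsto vu$ on $\L\{\tau\}\tau$ with basis $(\tau,\dots,\tau^r)$, so one decomposes $\tau^i u$, not $u\tau^i$; this does not affect the characteristic polynomial but does affect the description of the algorithm.
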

Again, it is worth considering the situation with $r$ and $q$
fixed. In this case, the runtimes we obtain are, respectively,
essentially linear in $d(d+m)n/m$ and ${\sf SM}(d,n, q)$. In the next
section, we review known values for ${\sf SM}$; for the best known
value of $\omega$, and fixed $q$, it is $(d^{1.69} n)^{1+o(1)}$ for $d
\le n^{0.76}$, and $(d n^{1.52})^{1+o(1)}$ otherwise. In the case
$d=\Theta(n)$, the runtimes are thus essentially linear in $n^3/m$ and
$n^{2.53}$, respectively (so which is the better algorithm depends on
the value of $m$). For $u=\tau^n$, the algorithm in the previous
theorem is of course superior.


\section{Computational Preliminaries}\label{sec:prelim}

The key element in our complexity analyses is the cost of the
following operations in $\L$: addition/subtraction, multiplication,
inverse and (iterated) Frobenius application.

Some of the algorithms we use below (multiplication and Euclidean
division in $\L\{\tau\}$ from~\cite{PUCHINGER2017b,CaLe17}) assume
that all these operations can be done using $\softO(n)$ operations in
$\F_q$. For the representation of $\L$ we use, this is however not
known to be the case; Couveignes and Lercier proved the existence of
``elliptic bases'' that satisfy these requirements~\cite{CoLe09}, but
conversion to our representation does not appear to be obvious.

This explains why in our main result, we do not count operations in
$\F_q$, but bit operations instead (our complexity model is a standard
RAM); we explain below how this allows us to bypass the constraints
above.

Using FFT based algorithms, polynomials of degree at most $n$ with
coefficients in $\F_q$ can be multiplied in boolean time $(n \log
q)^{1 + o(1)}$ \cite{Cantor1991OnFM,HaHoLe17}. It follows that
elementary field operations (addition, multiplication, inversion) in
$\L=\F_q[t]/(\ell(t))$ can be done with the same asymptotic cost.

Conversions between univariate and bivariate representations for
elements of $\L$ take the same asymptotic runtime.
Denote by $\alpha$ the isomorphism $\L=\F_q[t]/(\ell(t)) \to
\F_q[x,t]/(\frakp(x),g(x,t))$; then, given $f$ of degree less than $n$
in $\F_q[t]$, we can compute the image $\alpha(f \bmod \ell(t))$ in $(n
\log q)^{1 + o(1)}$ bit operations; the same holds for
$\alpha^{-1}$~\cite{PoSc13a,VANDERHOEVEN2020101405}.


The last important operation is the application of the $q$-power
Frobenius in $\L$. Recall that given polynomials $f,g, h \in \F_q[x]$
of degree at most $n$, {\em modular composition} is the operation that
computes $f(g) \bmod h$. As showed in~\cite{vonzurGathen1992}, for $c$
in $\L=\F_q[t]/(\ell(t))$, $c^q$ can be computed in the same
asymptotic time (up to logarithmic factors) as degree $n$ modular
composition, following a one-time precomputation that takes $(n \log^2
q)^{1+o(1)}$ bit operations. This then extends to arbitrary powers
(positive and negative) of the Frobenius.  We
should point out that modular composition techniques also underlie the
algorithms for switching between the two representations of the
elements in $\L$ mentioned above.

In~\cite{KedUman}, Kedlaya and Umans proved that modular composition
in degree $n$ can be computed in $(n \log q)^{1 + o(1)}$ bit
operations (see also the refinement due to van der Hoeven and
Lecerf~\cite{VANDERHOEVEN2020101405}), whence a similar cost for
(iterated) Frobenius in $\L$. Here, the fact that we work in a boolean
model is crucial: Kedlaya and Umans' algorithm is not known to admit a
description in terms of $\F_q$-operations.

From this, we can directly adapt the cost analyses in
\cite{PUCHINGER2017b,CaLe17} to our boolean model. In particular,
following the latter reference (which did so in an algebraic cost
model), we let ${\sf SM}(d,n,q)$ be a function such that
\begin{itemize}
\item degree $d$ multiplication and right Euclidean division in
  $\L\{\tau\}$ can be done in $O({\sf SM}(d,n,q))$ bit operations
\item for $n,q$ fixed, $d \mapsto {\sf SM}(d,n,q)/d$ is non-decreasing.
\end{itemize}
The latter condition is similar to the super-linearity of
multiplication functions used in~\cite{GaGe13}, and will allow us to
streamline some cost analyses. Unfortunately, there is no simple
expression for ${\sf SM}(d,n,q)$: on the basis of the algorithms
in~\cite{PUCHINGER2017b,CaLe17}, the analysis done in~\cite{CaLe17}
gives the following upper bounds:
\begin{itemize}
\item for $d \le n^{(5-\omega)/2}$, we can take
  ${\sf SM}(d,n,q)$ in $(d^{(\omega+1)/2} n \log q)^{1+o(1)}$
\item else, we can take ${\sf SM}(d,n,q)$ in $(d n^{4/(5-\omega)} \log
  q)^{1+o(1)}$
\end{itemize}
For instance, with $d=n$, this is $(n^{(9-\omega)/(5-\omega)} \log q
)^{1+o(1)}$.

With $\omega=2.37$, the cost is $(d^{1.69} n \log
q)^{1+o(1)}$ for $d \le n^{0.76}$, and $(d n^{1.52}\log q)^{1+o(1)}$
otherwise; the exponent for $d=n$ is $2.53$. For completeness, we
point out that these algorithms heavily rely on Frobenius
applications, and as such, require spending the one-time cost $(n
\log^2 q)^{1+o(1)}$ mentioned previously.


One should also keep in mind that these asymptotic cost analyses are
not expected to reflect practical runtimes. To the authors' knowledge,
software implementations of the Kedlaya-Umans algorithm achieving its
theoretical complexity, or of matrix multiplication with exponent
close to $2.37$, do not currently exist. For practical purposes,
implementations of modular composition use an algorithm due to Brent
and Kung~\cite{BrKu78}, with an algebraic complexity of $O(n^{(\omega
  + 1)/2}) $ operations in $\F_q$. Revisiting skew polynomial
algorithms and their analyses on such a basis is work that remains to
be done.


\section{Prior Work}

The question of computing the characteristic polynomial, particularly
of the Frobenius endomorphism, was studied in detail
in~\cite{gekfrobdist} for the rank two case only.

The most general approach constructs a linear system based on the
degree constraints of the coefficients $a_i = \sum_{j = 0}^{n(r-i)/r}
a_{i,j} x^j$. Evaluating the characteristic polynomial at the
Frobenius element and equating coefficients gives a linear system
based on
\begin{equation}
    \tau^{nr} + \displaystyle\sum_{i=0}^{r-1}\sum_{j =
      0}^{\frac{n(r-i)}{r}}\sum_{k = 0}^{n(r-i)} a_{i,j}f_{j,k}
    \tau^{k + ni} = 0,
\end{equation}
with $f_{j,k}$ the coefficients of $\phi_{x^{j}}$. Letting
$\minpoly(\tau^n)$ denote the minimal polynomial of $\tau^n$ (as an
element of the division algebra $\emorph_{\L}^0(\phi)$ over the field
$K=\F_q(x)$), the solution of the preceding system is unique and
yields the characteristic polynomial if and only if $\minpoly(\tau^n)
= \charpoly(\tau^n)$.

Garai and Papikian gave an algorithm for computing the characteristic
polynomial~\cite[\S 5.1]{garaipap} valid for the prime field case
only. As with the previous approach, this relies on the explicit
computation of $\phi_{x^i}$, which is the dominant computational
step. This can be done by $O(n^2)$ evaluations of the recurrence
\[f_{i+1,j} = \gamma_x^{q^j}f_{i,j} + \sum_{t=1}^{r} \Delta_t^{q^{j-
    t}} f_{i, j-t}.\]
Thus the bit complexity of computing all of
$\phi_x, \phi_{x^2}, \ldots, \phi_{x^n}$ is $(r n^3\log(q))^{1 +
  o(1)}$.

Further study of algorithms for the specific case of the Frobenius
endomorphism in rank $r=2$ was done in~\cite{Narayanan18}
and~\cite{MuslehSchost}. The latter focused on the complexity of the
algorithms and used the same computational model that will be used
here. As we reported after Theorem~\ref{mainresult}, the best known
runtime to date was quadratic in $n$, except in the case where
$\minpoly(\tau^n) = \charpoly(\tau^n)$, or in the prime field case
where a bit cost of $(n^{1.5} \log q + n \log^2 q)^{1+o(1)}$ is
possible \cite{DOLISKANI2021199}. To our knowledge, no previous
analysis is available for an arbitrary endomorphism $u$.

In the context of elliptic curves, Kedlaya's algorithm
\cite{kedlayapointcounting} computes the characteristic polynomial of
a matrix representation of the lift of the Frobenius action to a
subspace of the Monsky-Washnitzer cohomology, up to some finite
precision. Our algorithm follows the same high-level approach: we
compute a matrix for the endomorphism acting on the crystalline
cohomology with coefficients in a power series ring analogue to Witt
vectors. The induced endomorphism turns out to be quite simple to
describe in terms of skew-polynomial multiplication, which eliminates
the need for a complicated lifting step.


\section{Crystalline Cohomology}

In this section, we first review the construction of the crystalline
cohomology of a Drinfeld module and its main properties; this can be
found in~\cite{Angles1997}, where the definition is credited to
unpublished work of Gekeler. Then, we introduce truncated versions of
these objects, which reduce the computation of characteristic
polynomials of endomorphisms of a Drinfeld module to characteristic
polynomial computations of matrices over truncated power series rings.


\subsection{Definition}

The contents of this subsection is
from~\cite{Gekeler88,Angles1997}. The set of {\em derivations}
$D(\phi, \L)$ of a Drinfeld module $\phi$ is the set of $\F_q$-linear
maps $\eta: A \to \sring\tau$ satisfying the relation
\begin{equation*}
    \eta_{ab} = \gamma_a\eta_b + \eta_a\phi_b, \quad a,b \in A
\end{equation*}
Let then $\var$ be a new variable.  The set $D(\phi,\L)$ can be made
into an $\L[\var]$-module in the following manner.

 
\begin{definition}{\cite[Section~2]{Angles1997}}
  The set $D(\phi, \L)$ is an $\L[\var]$-module under $ (c \var^i *
  \eta)_a =c \eta_a \phi_{x^i}$, for $\eta$ in $D(\phi, \L)$, $c$ in
  $\L$, $i \ge 0$ and $a$ in $A$.
\end{definition}

Let further $I$ be the ideal of $\L[\var]$ generated by $\var -
\gamma_x$; for $k \ge 1$, we set
\[W_k = \L[\var]/I^k$$ and $$\WL = \varprojlim \hspace{1mm} W_k \cong
\L[[\var-\gamma_x]].\] Thus $\WL$ comes equipped with projections
$\pi_k: \WL \to W_k$ obtained by truncation of a power series, written
as sum of powers of $(\var-\gamma_x)$, in degree $k$.
We have canonical ring homomorphisms $\iota_k: A \to W_k$
given by $\iota_k(x) = \var \bmod {I}{}^k$. They lift to an inclusion
$\iota: A \to \WL$, simultaneously commuting with each $\pi_k$, which
represents elements of $A$ via their $I$-adic expansion.

The \textit{crystalline cohomology} $\hcrys$ of $\phi$ is the
$\WL$-module $ \WL \otimes_{\L[\var]} D(\phi, \L)$, that is, the
completion of $D(\phi,\L)$ at the ideal $I=(\var-\gamma_x)$ of
$\L[\var]$.

Gekeler proved that $D(\phi, \L)$ is a projective, hence free,
$\L[\var]$-module of rank $r$~\cite{Gekeler88}, with canonical basis
${\ehat}{}^{(i)}$ such that ${\ehat}{}^{(i)}(x) = \tau^{i}$ for $1
\leq i \leq r$. From this, it follows that $\hcrys$ is a free
$\WL$-module of rank $r$ as well, as pointed out in~\cite{Angles1997}.

\begin{remark}
  In that reference, $A$ is not necessarily a polynomial ring, and
  $\L[\var]$ is replaced by $A_\L:=\L \otimes_{\F_q} \A$. In this
  case, $D(\phi,\L)$ is a projective $A_\L$-module of rank $r$, the
  definition of ideal $I$ changes, but it remains maximal in
  $A_\L$, so the completion $\WL$ of $A_\L$ at $I$ is still a local
  ring and $\hcrys$ is still free of rank $r$ over $\WL$.
\end{remark}

An endomorphism $u$ of $\phi$ induces an $\L[\var]$-endomorphism $u^*$
of $D(\phi,\L)$, defined as $(u^*(\eta))_x = \eta_x u$, for $\eta$ in
$D(\phi,\L)$; the same holds for the completion $\hcrys$.
Following~\cite{Angles1997}, using the fact that $\hcrys$ is free over
$\WL$, one can then define the characteristic polynomial
$\charpoly_{\WL}(u^*)$ in the usual manner.

Recall now that $\charpoly(u)$ denotes the characteristic polynomial
of $u$, as defined in Section~\ref{ssec:charpoly}. The following
theorem due to Angl\`es~\cite[Thm. 3.2]{Angles1997} relates this
characteristic polynomial to that of the induced endomorphism on
$\hcrys$, where $\iota$ below acts coefficient-wise.

\begin{theorem}\label{maintheorem}  \label{cpoly}
  For $u$ in $\emorph_{\L}(\phi)$, $\charpoly(u)^{\iota} =
  \charpoly_{\WL}(u^*)$.
\end{theorem}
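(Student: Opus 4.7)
The plan is to compute $\charpoly_{\WL}(u^*)$ concretely via a matrix representation on Gekeler's canonical basis of $D(\phi,\L)$ and then identify this polynomial with $\charpoly(u)^{\iota}$ using Cayley-Hamilton together with the Tate-module characterization of the reduced characteristic polynomial. I would first use Gekeler's result that $D(\phi,\L)$ is free of rank $r$ over $\L[\var]$ with basis $\ehat^{(1)},\ldots,\ehat^{(r)}$: the evaluation $\eta \mapsto \eta_x$ identifies $D(\phi,\L)$ with $\tau\L\ang{\tau}$ as an $\L[\var]$-module, where $\L$ acts by left multiplication and $\var$ by right multiplication by $\phi_x$, and completing at $I=(\var-\gamma_x)$ gives $\hcrys$ as a free $\WL$-module of rank $r$.

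Since $u \in \emorph_\L(\phi)$ commutes with every $\phi_a$, right multiplication by $u$ on $\tau\L\ang{\tau}$ is $\L[\var]$-linear. I would expand $u^*(\ehat^{(i)}) = \tau^i u$ uniquely as $\sum_{j=1}^r w_{ij}(\var) * \ehat^{(j)}$ with $w_{ij} \in \L[\var]$, obtaining a matrix $M_u \in \L[\var]^{r\times r}$, and set $P(Z) = \det(ZI - M_u) \in \L[\var][Z] \subset \WL[Z]$. Classical Cayley-Hamilton over $\WL$ then gives $P(u^*) = 0$ on $\hcrys$. Unfolding the $\L[\var]$-action and using that $u$ commutes with $\phi_x$, the identity $P(u^*)\ehat^{(i)} = 0$ evaluated at $x$ factors in $\L\ang{\tau}$ as $\tau^i \bigl( u^r + \sum_{k=0}^{r-1} R_{k,i}(\phi_x)\, u^k \bigr) = 0$, where $R_{k,i}$ is obtained from the coefficients of $P$ after twisting by $\sigma^{-i}$ to move $\tau^i$ past $\L$-scalars; since $\L\ang{\tau}$ is a domain, the parenthesized factor vanishes, yielding a monic annihilation identity for $u$ in $\emorph_\L(\phi)$.

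The remaining and most delicate step is identifying $P$ with $\iota(\charpoly(u))$. This has two components: rationality, namely that the coefficients of $P$ actually lie in $\iota(A) \subset \WL$ rather than merely in $\L[\var]$; and uniqueness, namely that the resulting pullback $Q \in A[Z]$ really is the reduced characteristic polynomial $\charpoly(u)$. Both are most cleanly derived from the Tate-module characterization of $\charpoly(u)$ recalled in Section~\ref{ssec:charpoly}, via a comparison between the crystalline realization $\hcrys$ and the $\frakl$-adic realization $T_\frakl(\phi)$ of $u$, as developed by Gekeler and Angl\`es~\cite{Gekeler91,Angles1997}. The main obstacle is precisely this comparison: monicity of degree $r$ together with annihilation of $u$ do not pin down $\charpoly(u)$ when $\minpoly(u)$ properly divides it, so the argument cannot be completed by purely algebraic manipulation of the annihilator; the input from the Tate module, or equivalently from the $A$-motive structure linking $\hcrys$ to Anderson's framework, is the technical heart of the argument in~\cite[Thm.~3.2]{Angles1997}.
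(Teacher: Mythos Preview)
The paper does not prove this theorem at all: it simply attributes the result to Angl\`es \cite[Thm.~3.2]{Angles1997} and states it without argument. Your proposal is therefore more ambitious than the paper's own treatment, not less.

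Your outline is sound as far as it goes. The identification of $D(\phi,\L)$ with $\L\{\tau\}\tau$, the construction of the matrix $M_u$, and the Cayley--Hamilton step are all routine and match exactly how the paper later (in Section~\ref{sec:algo}) sets up the computation. You are also right that the genuinely non-trivial content is the rationality of the coefficients of $P$ over $A$ and the identification of $P$ with the reduced characteristic polynomial rather than some other monic annihilator of degree $r$; and you correctly locate this step in the Tate/crystalline comparison of Gekeler and Angl\`es. In other words, your proposal ends by deferring the decisive step to precisely the same citation the paper uses for the entire statement, so there is no substantive divergence.

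One small remark: your Cayley--Hamilton unfolding produces, for each basis vector $\ehat^{(i)}$, an annihilation identity with $\sigma^{-i}$-twisted $\L$-coefficients. These are a priori different identities for different $i$ and none of them is visibly the $\phi$-evaluation of a polynomial in $A[Z]$; so this step, while correct, does not by itself yield a single canonical monic relation over $A$. This is exactly the gap you flag as ``rationality,'' and it is indeed not closable without the comparison input you cite.
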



\subsection{Truncated Cohomology}

Recall now that $\frakp \in A$ is the minimal polynomial of $\gamma_x
\in \L$ over $\F_q$. For $k \ge 1$, we are going to define an
$\F_q$-linear homomorphism $\chi_k$ such that the following diagram
commutes:
\begin{center}
\begin{tikzcd}
 & \WL \arrow[d, "\pi_k"] \\
A \arrow[rd, "\theta_k:f(x) \mapsto f(\var) \bmod \frakp(\var)^k"'] \arrow[r, "\iota_k"] \arrow[ru, "\iota"] & W_k \arrow[d, dashed, "\chi_k"] \\
& \F_q[\var]/(\frakp(\var)^k)
\end{tikzcd}
\end{center}

There exists an isomorphism \[T_k:\F_q[x,\var]/ (\frakp(x),(\var-x)^k)
\to \F_q[y]/(\frakp(y)^k);\] see e.g.~\cite[Lemma~13]{MeSc16}. On the
other hand, recall that $\L=\F_q[t]/(\ell(t))$ is isomorphic
to \[\F_q[x, \vart]/(\frakp(x), g(x, \vart)),\] for some $g$ in
$\F_q[x,\vart]$, monic of degree $n/m$ in $\vart$; in this
representation of $\L$, $\gamma_x$ is simply (the residue class of)
$x$. As a result, we get
\begin{align}
  W_k&= \F_q[t,\var]/(\ell(t),(\var-\gamma_x)^k)\nonumber\\
  &\simeq \F_q[x,\vart,\var]/(\frakp(x),g(x,\vart),(\var-x)^k)\nonumber\\
  &\simeq \F_q[\var,\vart]/(\frakp(\var)^k,G_k(\var,\vart)),\label{eq:Wk}
\end{align}
for a certain polynomial $G_k \in \F_q[\var,\vart]$, monic of degree $n/m$
in $\vart$.

We can then define $\chi_k: W_k \to \F_q[\var]/(\frakp(\var)^k)$ by
\[\chi_k: \sum_{0 \le i < n/m} c_i \vart^i\mapsto c_0,\] and we verify that
it satisfies our claim.  The details of how to compute this
homomorphism are discussed in Section~\ref{sec:algo}.

For $k \ge 1$, we further define the $\textit{precision}$ $k$
cohomology space $\hcrysk$ as the $W_k$-module
\[D(\phi, \L) / I^k\,D(\phi, \L) \simeq \hcrys / I^k\, \hcrys.\]
It is thus free of rank $r$, and an endomorphism $u$ of $\phi$ induces
a $W_k$-linear endomorphism $u^*_k$ of $\hcrysk$.

\begin{remark}
  In~\cite{Gekeler88}, Gekeler introduced {\em de Rham cohomology} of
  Drinfeld modules; this is the case $k=1$ in this construction (in
  which case $W_k=\L$).
\end{remark}

In the following claim,
recall that for a polynomial $P$ and for any map $\chi$ acting on its
coefficient ring, we let $P^{\chi}$ denote coefficient-wise
application of $\chi$ to $P$.

\begin{corollary}\label{maincor}  \label{corpoly}
    For $u$ in $\emorph_{\L}(\phi)$ and $k \ge 1$,
    $\charpoly(u)^{\theta_k} = \charpoly_{W_k}(u^*_k)^{\chi_k}$.
\end{corollary}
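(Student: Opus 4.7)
The plan is to derive the corollary from Theorem~\ref{maintheorem} by first pushing the identity of characteristic polynomials down through the truncation $\pi_k:\WL\to W_k$, and then applying $\chi_k$ coefficientwise on the polynomial in $Z$.

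The first step is to observe that the formation of the characteristic polynomial commutes with the base change along $\pi_k$. Since $\hcrys$ is a free $\WL$-module of rank $r$, I would fix a basis and let $M\in\WL^{r\times r}$ be the matrix representing $u^*$ in it. From the definitional isomorphism $\hcrysk \simeq \hcrys/I^k\hcrys \simeq W_k\otimes_{\WL}\hcrys$ and the fact that $u^*_k$ is induced by $u^*$, the reduced basis represents $u^*_k$ by $\pi_k(M)$. Applying the ring morphism $\pi_k$ to the determinantal expression yields
\[
\pi_k(\charpoly_{\WL}(u^*))=\det(ZI-\pi_k(M))=\charpoly_{W_k}(u^*_k).
\]

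Applying $\pi_k$ coefficientwise to the identity of Theorem~\ref{maintheorem} and invoking $\iota_k=\pi_k\circ\iota$, which is built into the construction of $\iota$ as the inverse limit of the $\iota_k$, then gives
\[
\charpoly(u)^{\iota_k}=\charpoly_{W_k}(u^*_k)
\]
in $W_k[Z]$. A second application of the same functoriality principle, this time along $\chi_k$, finishes the argument: the coefficients of $\charpoly(u)^{\iota_k}$ all lie in $\iota_k(A)\subseteq W_k$, and the commutativity of the diagram preceding the statement asserts $\chi_k\circ\iota_k=\theta_k$, so applying $\chi_k$ coefficientwise to both sides produces $\charpoly(u)^{\theta_k}=\charpoly_{W_k}(u^*_k)^{\chi_k}$.

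The only substantive point in this strategy is the commutativity $\chi_k\circ\iota_k=\theta_k$, but this is essentially immediate from the definitions: under the chain of isomorphisms~(\ref{eq:Wk}) identifying $W_k$ with $\F_q[\var,\vart]/(\frakp(\var)^k,G_k(\var,\vart))$, the image $\iota_k(f(x))=f(\var)\bmod I^k$ of $f\in A$ is represented by a polynomial involving only $\var$, so extracting its constant $\vart$-coefficient via $\chi_k$ returns $f(\var)\bmod\frakp(\var)^k=\theta_k(f)$. I do not foresee a genuine obstacle here; the entire proof amounts to a tight chase through the given diagram once one treats $\hcrysk$ as the base change of $\hcrys$ along $\pi_k$ and exploits the compatibility of $\det$ with ring homomorphisms.
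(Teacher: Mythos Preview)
Your proposal is correct and follows exactly the paper's approach: the paper's own proof is the single line ``Apply $\chi_k\circ\pi_k$ coefficient-wise to the equality in Theorem~\ref{cpoly},'' which implicitly relies on precisely the two facts you spell out, namely that $\pi_k$ carries $\charpoly_{\WL}(u^*)$ to $\charpoly_{W_k}(u^*_k)$ via base change and that $\chi_k\circ\iota_k=\theta_k$ from the commutative diagram. You have simply unpacked the one-line argument in full detail.
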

\begin{proof}
  Apply ${\chi_k \circ \pi_k}$ coefficient-wise to the equality in
  Theorem~\ref{cpoly}.
\end{proof}

If $u$ has degree $d$ in $\tau$, we know that all coefficients of
$\charpoly(u)$ have degree at most $d$, so they can be recovered from
their reductions modulo $\frakp^k$ for $k = \lceil \frac{d + 1}{m}
\rceil \in O((d+m)/m)$. In the prime field case, where $m=n$, and for
the special case $u=\tau^n$, the above formula gives $k=2$, but we can
take $k=1$ instead; this is discussed in Section~\ref{ssec:other}.

Note also that if we take $k=d+1$, there is no need to consider the
map $\chi_k$: on the representation of $W_{d+1}$ as
\[W_{d+1}=\F_q[x,\vart,\var]/(\frakp(x),g(x,\vart),(\var-x)^{d+1}),\] for $f$ of
degree up to $d$, $\iota_k(f)$ is simply the polynomial $f(\var)$, so
we can recover $f$ from $\iota_k(f)$ for free. We will however refrain
from doing so, as it causes $k$ to increase.


\section{Main Algorithms}\label{sec:algo}

We will now see how the former discussion can be made more concrete,
by rephrasing it in terms of skew polynomials only. The evaluation map
$\eta \mapsto \eta_x$ gives an additive bijection $D(\phi, \L) \to \L\{
\tau \} \tau$. This allows us to transport the $\L[\var]$-module
structure on $D(\phi,\L)$ to $\L\{ \tau \} \tau$: one verifies that it
is given by $(c \var^i * \eta) =c \eta \phi_{x^i}$, for $\eta$ in
$\L\{\tau\}\tau$, $c$ in $\L$ and $i \ge 0$, and that
$\mathcal{B}=(\tau,\dots,\tau^r)$ is a basis of $\L\{\tau\}\tau$ over
$\L[\var]$.

Further, an endomorphism $u \in \emorph_{\L}(\phi)$ now induces an
$\L[\var]$-linear endomorphism $u^\star: \L\{\tau\}\tau \to
\L\{\tau\}\tau$ simply given by $u^\star(v) = v u$ for $v$ in
$\L\{\tau\}\tau$. Reducing modulo the ideal $I^k \subset \L[\var]$,
we denote by $u_k^\star$ the corresponding $W_k$-linear endomorphism
on the quotient module $\L\{\tau\}\tau/I^k_\L \L\{\tau\}\tau\simeq
\hcrysk$.

We can then outline the algorithm referenced in
Theorems~\ref{mainresult} and~\ref{mainresult2}; its correctness
follows directly from Corollary \ref{maincor} and the bound on $k$
given previously.

\begin{enumerate}
\item Set $k = \lceil \frac{d + 1}{m} \rceil$, with $d=\deg_\tau(u)$,
  except if $n=m$ and $u=\tau^n$ (in which case we can take $k=1$)
\item\label{step2} Compute the coefficients $u_{i,1},\dots,u_{i,r} \in
  W_k$ of $\tau^i u \bmod I^k$ on the basis $\mathcal{B}$, for
  $i=1,\dots,r$
\item Using the coefficients computed in step~\ref{step2}, construct
  the matrix for $u^\star_k$ acting on $\L\{\tau\}\tau/I^k_\L
  \L\{\tau\}\tau$ and compute its characteristic polynomial
  $\charpoly_{W_k}(u^\star_k) \in W_k[Z]$
\item Apply the map $\chi_k$ to the coefficients of
  $\charpoly_{W_k}(u^\star_k)$ to recover $\charpoly(u)^{\theta_k}$,
  and thus $\charpoly(u)$.
\end{enumerate}

In Subsections~\ref{ssec:recurrence} to~\ref{ssec:frobenius}, we
discuss how to complete Step~\ref{step2}: we give two solutions for
the case of an arbitrary endomorphism $u$, and a dedicated, more
efficient one, for $u=\tau^n$. We freely use the following notation:
\begin{itemize}
\item for $c$ in $\L$ and $t \in \Z$, let $c^{[t]}$ denote the value
  of the $t$th power Frobenius applied to $c$, that is,
  $c^{[t]}=c^{q^t}$
\item for $f$ in $\L[\var]$, $f^{[t]} \in \L[\var]$ is obtained by
  applying the former operator coefficient-wise, so
  $\deg(f)=\deg(f^{[t]})$
\item for $M=(m_{i,j})_{1 \le i \le u, 1 \le j \le v}$ in
  $\L[\var]^{u\times v}$, $M^{[t]}$ is the matrix with entries
  $(m^{[t]}_{i,j})_{1 \le i \le u, 1 \le j \le v}$.
\end{itemize}
Finally, we define $\mu=(\var-\gamma_x)^k \in \L[\var]$ (with the
value of $k$ defined above); it generates the ideal $I^k$ in $\L[\var]$.


\subsection{Using a Recurrence Relation}\label{ssec:recurrence}

The following lemma is a generalization of a recurrence noted by
Gekeler~\cite[Section~5]{frobderham} for $r=2$. Recall that we write
$\phi_x = \gamma_x + \Delta_1\tau^1 + \ldots + \Delta_r\tau^r$, with
all $\Delta_i$'s in $\L$; in the expressions below, we write
$\Delta_0=\gamma_x $.

\begin{lemma}\label{lemmarec}
  For any $t \geq 1$, the following relation holds in the $\L[\var]$-module
  $\L\{\tau\}\tau$:
  \begin{equation}\label{mainrec}
    \sum_{i = 0}^{r}\Delta_{i}^{[t]} \tau^{t+i} = \var * \tau^t.
  \end{equation}
\end{lemma}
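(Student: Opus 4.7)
The plan is to unwind both sides using the definitions already established. By the definition of the transported $\L[\var]$-module structure on $\L\{\tau\}\tau$, we have $\var * \tau^t = \tau^t \phi_x$ directly from the formula $(c\var^i * \eta) = c\,\eta\,\phi_{x^i}$ with $c=1$, $i=1$, $\eta = \tau^t$. So the whole statement reduces to showing that
\[
\tau^t \phi_x = \sum_{i=0}^{r} \Delta_i^{[t]} \tau^{t+i}
\]
inside $\L\{\tau\}$.

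For this, I substitute $\phi_x = \gamma_x + \Delta_1 \tau + \cdots + \Delta_r \tau^r = \sum_{i=0}^{r} \Delta_i \tau^i$ (using the convention $\Delta_0 = \gamma_x$) and expand the product on the left. The only nontrivial step is to push $\tau^t$ past each scalar $\Delta_i \in \L$. Iterating the defining Ore relation $\tau u = u^q \tau$ of $\L\{\tau\}$ yields $\tau^t u = u^{q^t}\tau^t = u^{[t]}\tau^t$ for every $u \in \L$. Applying this with $u = \Delta_i$ gives
\[
\tau^t \phi_x = \sum_{i=0}^{r} \tau^t \Delta_i \tau^i = \sum_{i=0}^{r} \Delta_i^{[t]} \tau^{t+i},
\]
which is exactly the desired identity.

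There is no real obstacle here: the lemma is essentially a bookkeeping exercise once one unravels how the variable $\var$ acts (namely by right multiplication by $\phi_x$) and how $\tau^t$ commutes with elements of $\L$. The only thing worth pointing out explicitly in the write-up is that the identity indeed holds in $\L\{\tau\}\tau$ (not merely in $\L\{\tau\}$), which is automatic since every term $\Delta_i^{[t]} \tau^{t+i}$ has $t+i \ge 1$.
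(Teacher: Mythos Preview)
Your proof is correct and follows essentially the same route as the paper's: both unwind $\var * \tau^t$ via the module action to $\tau^t \phi_x$, substitute $\phi_x = \sum_{i=0}^{r}\Delta_i \tau^i$, and commute $\tau^t$ past each $\Delta_i$ using the Ore relation. Your write-up is slightly more detailed (spelling out the iterated relation $\tau^t u = u^{[t]}\tau^t$ and remarking that the result lies in $\L\{\tau\}\tau$), but there is no substantive difference in approach.
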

\begin{proof}
  This follows directly from the module action of $\L[\var]$ on
  $\L\{\tau\}\tau$, by commuting $\tau^t$ across the defining
  coefficients $\Delta_i$ of $\phi$:
$$
  \var * \tau^t = \tau^t \phi_x  = \tau^t \sum_{i = 0}^{r}\Delta_{i} \tau^{i} = \sum_{i = 0}^{r}\Delta_{i}^{[t]} \tau^{t+i}.
\qedhere$$
\end{proof}
For $i=0,\dots,r-1$, let $\Lambda_{i} = -\frac{\Delta_i}{\Delta_r}$
and define the order $t$ companion matrix for the recurrence, $\cA_t
\in\L[\var]^{r\times r}$, as
\begin{equation}
    \cA_t = \begin{bmatrix}
    \Lambda_{r-1}^{[t]} & \Lambda_{r-2}^{[t]} & \ldots & \Lambda_1^{[t]} & \Lambda_0^{[t]} + \frac{\var}{\Delta_r^{[t]}} \\
    1 & 0 & \ldots & 0 & 0 \\
    0 & 1 & \ldots & 0 & 0 \\
    \vdots & \vdots & \ddots & \vdots & \vdots \\
    0 & 0 & \ldots & 1 & 0
    \end{bmatrix}
\end{equation}
For $t \ge 1$, let $\kappa_t \in \L[\var]^{1 \times r}$ denote the coefficient
vector of $\tau^t$ with respect to the standard basis
$\mathcal{B}$. Then, we have the following relation between $r \times
r$ matrices over $\L[\var]$:
\begin{equation}
    \begin{bmatrix}
    \kappa_{t + r} \\
    \kappa_{t +r - 1} \\
    \vdots \\
    \kappa_{t + 1}
    \end{bmatrix}
 = \cA_t \begin{bmatrix}
    \kappa_{t +r- 1} \\
    \kappa_{t +r- 2} \\
    \vdots \\
    \kappa_{t} 
    \end{bmatrix} 
    \end{equation}
For $k\ge 1$, these relations can be taken modulo $\mu$, to give
equalities over $W_k=\L[\var]/\mu$; below, we will write $\bar
\kappa_t =\kappa_t \bmod \mu \in W_k^{1\times r}$.

Starting from $\bar \kappa_t,\dots,\bar \kappa_{t+r-1}$, we obtain $\bar
\kappa_{t+r-1}$ using $O(r)$ operations (divisions, Frobenius) in $\L$
to obtain the coefficients appearing on the first row of $\cA_t$,
followed by $O(kr)$ operations in $\L$ to deduce the entries of $\bar
\kappa_{t+r}$.

Below, we will need $\bar \kappa_1,\dots,\bar \kappa_{d+r}$.  Altogether,
computing them takes $((d+r)krn\log q)^{1 + o(1)}$ bit operations; with
our chosen value of $k$, this is also
\[( (d+r)(d+m)rn/m\log q + )^{1 + o(1)}.\] Let us then write $u = u_0 +
\cdots + u_d \tau^d$. For $i=1,\dots,r$, we have
\[\tau^i u = u_0^{[i]}\tau_i +\cdots + u_d^{[i]} \tau^{d+i},\]
so the coefficient vector
$[u_{i,1} \cdots u_{i,r}] \in W_k$ of $\tau^i u \bmod I^k$ on the basis
$\mathcal{B}$ is given by the product
\[[u_0^{[i]}~\cdots~u_d^{[i]}]
\begin{bmatrix}
\bar \kappa_{i} \\
\bar\kappa_{i+1} \\
    \vdots \\
~\bar \kappa_{i+d}~
\end{bmatrix} \in W_k^{1 \times r}.\]
Each such operation takes $O(dkrn)$ operations in $\L$, for a total of
$(d(d+m)r^2n/m \log q)^{1+o(1)}$ bit operations if done independently
of one another (this is the dominant cost in the algorithm).

In cases when $d$ is not small compared to $r$, we can reduce the cost
slightly using matrix arithmetic, since all coefficient vectors we
want can be read off an $r \times (d+r) \times r$ matrix product,
$$
\begin{bmatrix}
  u_0^{[1]} & \cdots & u_d^{[1]} & 0 &\cdots&\cdots &0\\
  0 & u_0^{[2]} & \cdots & u_d^{[1]} & 0& \cdots &0 \\
   && \ddots&&\ddots \\
  0 &\cdots&\cdots &0 & u_0^{[r]} & \cdots & u_d^{[r]}
\end{bmatrix}
\begin{bmatrix}
\bar \kappa_{1} \\
\bar\kappa_{i+1} \\
    \vdots \\
~\bar \kappa_{d+r}~
\end{bmatrix} \in W_k^{r \times r}.$$
This takes 
$((d+r)(d+m)r^{\omega-1}n/m \log q)^{1+o(1)}$ bit operations.


\subsection{Using Euclidean Division}\label{ssec:division}

This section describes an alternative approach to computing the
coefficients of an endomorphism $u$ on the canonical basis
$\mathcal{B}$. Computations are done in $\L[\var]$ rather than
$W_k=\L[\var]/\mu$ (we are not able to take reduction modulo $\mu$
into account in the main recursive process).

The algorithm is inspired by a well-known analogue for commutative
polynomials~\cite[Section~9.2]{GaGe13}: for a fixed $a \in \L[\var]$
of degree $r$, we can rewrite any $f$ in $\L[\var]$ as $f=\sum_{0 \le
  i < r} f_i(a) \var^i$, for some coefficients $f_0,\dots,f_{r-1}$ in
$\L[\var]$. This is done in a divide-and-conquer manner.

This approach carries over to the non-commutative setting. We start by
showing how $f$ of degree $d$ in $\L\{\tau\}$ can be rewritten as
$$f = \sum_i f_i \phi_x^i,$$ for some $f_i$ of degree less than $r$ in
$\L\{\tau\}$. If we let $K$ be such that $d < K r \le 2d$, with $K$ a
power of $2$, index $i$ in the sum above ranges from $0$ to $K-1$.

If $K=1$, we are done. Else set $K'=K/2$, and compute the quotient $g$
and remainder $h$ in the right Euclidean division of $f$ by
$\phi_x^{K'}$, so that $f = g\phi_x^{K'} + h$. Recursively, we compute
$g_0\dots,g_{K'-1}$ and $h_{0},\dots,h_{K'-1}$, such that
\[g = \sum_{0 \le i < K'} g_{i} \phi_x^i \quad\text{and}\quad h =
\sum_{0 \le i < K'} h_i \phi_x^i.\] Then, we return
$h_0,\dots,h_{K'-1},g_0,\dots,g_{K'-1}$. The runtime of the whole
procedure is $\softO({\sf SM}(d,n,q))$ bit operations, with $\sf SM$
as defined in Section~\ref{sec:prelim} (the analysis is the same as
the one done in the commutative case in~\cite{GaGe13}, and uses the
super-linearity of {\sf SM} with respect to $d$).

From there, we are able to compute the coefficients of $f \in
\L\{\tau\}\tau$ on the monomial basis $\mathcal{B}$. This essentially
boils down to using the procedure above, taking care of the fact that
$f$ is a multiple of $\tau$. Factor $\tau$ on the left, writing $f$ as
$\tau g$: if $f=F \tau$, $g=F^{[-1]}$. Apply the previous procedure,
to write $g=\sum_{0 \le i \le s} g_i \phi_x^i$, with all $g_i'$ of
degree less than $r$ and $s \le d/r$.

This gives $f = \tau g = \sum_{0 \le i \le s} (g^{[1]}_i \tau)
\phi_x^i$, with all coefficients $g^{[1]}_i \tau$ supported on
$\tau,\dots,\tau^r$. Extracting coefficients of $\tau,\dots,\tau^r$,
we obtain polynomials $G_1,\dots,G_r$ of degree at most $s$ in
$\L[\tau]$ such that $f = \sum_{1\le i \le r} G_i * \tau^i$.

The cost of left-factoring $\tau$ in $f$, and of multiplying all
coefficients of $g$ back by $\tau$, is $(d n \log q)^{1+o(1)}$, so the
dominant cost is $\softO({\sf SM}(d,n,q))$ bit operations from the
divide-and-conquer process. To obtain the matrix of an endomorphism
$u$ of degree $d$, we apply $r$ times this operation, to the terms
$\tau^i u$, $i=1,\dots,r$. The runtime is then dominated by $\softO(r
{\sf SM}(d+r,n,q))$. Finally, reducing the entries of the matrix
modulo $\mu=(\var-\gamma_x)^k$ takes softly linear time in the size of
these entries, so can be neglected.


\subsection{Special Case of the Frobenius Endomorphism}\label{ssec:frobenius}

In the particular case where $u = \tau^n$, we may speed up the
computation using a baby-step giant-step procedure, based on the
approach used in~\cite{DOLISKANI2021199}. As a first remark, note that
for $u=\tau^n$, $d=n$ and $k$ in $O(n/m)$.

In this case, it is enough to compute the vectors
$\bar\kappa_{n+1},\dots,\bar\kappa_{n+r}$. They are given by
\begin{equation}
    \begin{bmatrix}
    \bar\kappa_{n+r} \\ \bar\kappa_{n +r - 1} \\ \vdots \\ \bar\kappa_{n +
      1}
    \end{bmatrix} = \bar\cA_n\hdots \bar\cA_1,
    \end{equation}
with $\bar \cA_t$ the image of $\cA_t$ modulo $\mu=(\var-\gamma_x)^k$
for all $t$.  To compute the matrix product $\bar \cA = \bar \cA_n
\hdots\bar \cA_1$, we slightly extend the approach used
in~\cite{DOLISKANI2021199} (which dealt with the case $k=1$). Consider
the following element of $\L[\var]^{r \times r}$:
\begin{equation}\label{eqdef:B}
    \cB =
    \begin{bmatrix}
    \Lambda_{r-1} & \Lambda_{r-2} & \ldots & \Lambda_1 & \Lambda_0 \\
    1 & 0 & \ldots & 0 & 0 \\
    0 & 1 & \ldots & 0 & 0 \\
    \vdots & \vdots & \ddots & \vdots & \vdots \\
    0 & 0 & \ldots & 1 & 0
    \end{bmatrix} + \begin{bmatrix}
    0 & 0 & \ldots & \Delta_r^{-1} \\
    0 & 0 & \ldots & 0 \\
    \vdots & \vdots & \ddots & \vdots\\
    0 & 0 & \ldots & 0
    \end{bmatrix} \var.
\end{equation}
It follows in particular that for $t \ge 1$,
$$\cA_t = \cB^{[t]} \quad\text{and}\quad \bar\cA_t = \cB^{[t]} \bmod
\mu,$$ with reduction applied coefficient-wise.

Write $n^* = \lceil \sqrt{nk} \rceil \in O(n/\sqrt m)$,
and let $n$ be written as $n = n^*n_1 + n_0 $ with
$0 \le n_0 < n^*$, so that $n_1 \le \sqrt{n/k}$.
Setting
$$\cC = \cB^{[n^* + n_0]} \cdots \cB^{[n_0 + 1]}$$ and
$$\cC_0 =\cB^{[n_0]} \ldots \cB^{[1]},$$
the matrix $\cA$ is the product
\begin{equation*}
  \cA = \cC^{[(n_1-1) n^*]} \cdots \cC^{[n^*]} \cC \cC_0.
\end{equation*}
Our goal is to compute $\bar \cA=\cA \bmod \mu$, without computing
$\cA$ itself.

Any Frobenius application (of positive or negative index) in $\L$
takes $(n \log q)^{1+o(1)}$ bit operations. In particular, computing
all matrices $\cB^{[i]}$ that arise in the definitions of $\cC$ and
$\cC_0$ takes $(r n^2/\sqrt{m} \log q)^{1+o(1)}$ bit operations.

Once they are known, the next stage of the algorithm computes $\cC$
and $\cC_0$ in $\L[\var]$. This is done using a matrix subproduct-tree
algorithm~\cite[Chapter~10]{GaGe13}, using a number of operations in
$\L$ softly linear in $r^\omega n^*$. This is $(r^\omega n^2/\sqrt{m}
\log q)^{1+o(1)}$ bit operations.

To deduce the shifted matrices
$$\cC^{[(n_1-1)n^*]} \bmod \mu,\dots,\cC^{[n^*]}\bmod\mu,$$ we use the
following lemma.
\begin{lemma}
  For $f$ in $\L[\var]$ and $t \ge 0$,
  $$f^{[t]} \bmod \mu = (f \bmod \mu^{[-t]})^{[t]}$$
\end{lemma}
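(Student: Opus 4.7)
The plan is to exploit the fact that coefficient-wise Frobenius is a ring homomorphism on $\L[\var]$ which preserves the degree in $\var$. Concretely, for any $f,g \in \L[\var]$ and $t \in \Z$, one has $(fg)^{[t]} = f^{[t]} g^{[t]}$ and $(f+g)^{[t]} = f^{[t]}+g^{[t]}$, and $\deg_{\var}(f^{[t]})=\deg_{\var}(f)$.

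Once this is in hand, the proof reduces to an observation about Euclidean division. First I would note that $\mu = (\var - \gamma_x)^k$ and $\mu^{[-t]} = (\var - \gamma_x^{[-t]})^k$ are both monic of degree $k$ in $\var$, so Euclidean division by either is well-defined in $\L[\var]$. Write the division
\[
f = q\, \mu^{[-t]} + r, \qquad \deg_{\var} r < k,
\]
with $q,r \in \L[\var]$, where by definition $r = f \bmod \mu^{[-t]}$. Applying the ring homomorphism $(\cdot)^{[t]}$ coefficient-wise to both sides yields
\[
f^{[t]} = q^{[t]}\, (\mu^{[-t]})^{[t]} + r^{[t]} = q^{[t]}\,\mu + r^{[t]},
\]
and since $\deg_{\var} r^{[t]} = \deg_{\var} r < k = \deg_{\var} \mu$, this is exactly the Euclidean division of $f^{[t]}$ by $\mu$. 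Hence $f^{[t]} \bmod \mu = r^{[t]} = (f \bmod \mu^{[-t]})^{[t]}$, which is the claimed identity.

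There is essentially no obstacle here: everything rests on (i) Frobenius being a ring homomorphism on coefficients, and (ii) the uniqueness of quotient and remainder in Euclidean division by a monic polynomial. The only thing worth double-checking is that $\mu^{[-t]}$ is indeed monic and of the same degree as $\mu$, which is immediate since the leading coefficient of $\mu$ in $\var$ is $1$ and $1^{[-t]}=1$.
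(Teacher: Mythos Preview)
Your proof is correct and follows essentially the same approach as the paper: write the Euclidean division $f = q\,\mu^{[-t]} + r$, apply the coefficient-wise Frobenius $(\cdot)^{[t]}$ to obtain $f^{[t]} = q^{[t]}\mu + r^{[t]}$, and conclude from the degree bound on $r^{[t]}$. Your added remarks about $\mu^{[-t]}$ being monic of degree $k$ are a nice explicit check, but otherwise the arguments are identical.
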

\begin{proof}
  Let $g=f \bmod \mu^{[-t]}$, so that we have an equality of the form
  $f = a \mu^{[-t]} + g$ in $\L[\var]$. We raise this to the power
  $q^t$ coefficient-wise; this gives $f^{[t]} = a^{[t]}\mu+
  g^{[t]}$. Since $g$, and thus $g^{[t]}$, have degree less than $k$,
  this shows that $g^{[t]}=f^{[t]} \bmod \mu$.
\end{proof}

Applying this entry-wise, we compute $\cC^{[i n^*]}\bmod \mu$ by
reducing all entries of $\cC$ modulo $\mu^{[-i n^*]}$, then raising
all coefficients in the result to the power $q^{i n^*}$, for
$i=1,\dots,(n_1-1)$.

Matrix $\cC$ has degree $O(n/\sqrt{m})$, and the sum of the degrees of
the moduli $\mu^{[-t]}$ is $kn_1$, which is $O(n/\sqrt{m})$ as
well. Altogether, this takes $O(r^2n/\sqrt{m})$ applications of
Frobenius in $\L$, together with $O(r^2n/\sqrt{m})$ arithmetic
operations in $\L$ to perform all Euclidean
divisions~\cite[Chapter~10]{GaGe13}. Thus, the runtime is $(r^2
n^2/\sqrt{m} \log q)^{1+o(1)}$ bit operations.

Finally, we multiply all matrices $\cC^{[i n^*]}\bmod \mu$ and $\cC_0
\bmod \mu$. This takes $(r^\omega n^2/\sqrt{m} \log q)^{1+o(1)}$ bit
operations.


\begin{algorithm}[!t]
  \label{euclid}
\begin{algorithmic}[1]
  \Procedure{CharPolyFrobenius}{} \\ \textbf{Input} A field extension
  $\L$ of degree $n$ over $\F_q$, $(\Delta_1, \ldots, \Delta_r) \in
  \L^r$ representing a rank $r$ Drinfeld module $\phi$ over $(\L,
  \gamma)$.\\ \textbf{Output} $a_i \in A$ such that the characteristic
  polynomial of the Frobenius is $ X^r + \sum_{i=0}^{r-1} a_i X^i$.
  \State $n^*, n_1, n_0 \gets \lceil \sqrt{nk} \rceil, \lfloor n / n^* \rfloor, n \bmod n^* $.
  \State $\cB$ as in~\eqref{eqdef:B}
  \State $\cC \gets \cB^{[n^* + n_0]} \ldots \cB^{[n_0 + 1]}$.
  \State $\bar \cC_0 \gets \cB^{[n_0]} \ldots \cB^{[1]} \bmod \mu$
  \State $\bar \cC^{[in^*]}\gets (\cC \bmod \mu^{[-in^*]})^{[in^*]}$ for $0 \leq i < n_1$.
  \State $\bar \cA \gets \bigg(\displaystyle\prod_{i=0}^{n_1 - 1}\bar \cC^{[in^*]} \bigg) \bar \cC_0$
  \State $\bar a_i \gets \textnormal{coefficient of } Z^i \textnormal{ in } \det(\bar \cA - ZI)$
 \State \Return $a_i = \chi_k(\bar a_i)$ for $0 \leq i < r$
  
\EndProcedure
\end{algorithmic}
\end{algorithm}


\subsection{Other Operations}\label{ssec:other}

Once the coefficients of the skew polynomials $\tau^i u$ on the basis
$\mathcal B$ are known modulo $\mu$, we compute the characteristic
polynomial of the matrix formed from these coefficients. This can be
done with a bit cost of $(r^{\lambda} kn\log q)^{1+o(1)}$ when the
matrix has entries in $W_k$, with $\lambda$ the exponent
defined in Section~\ref{ssec:charpoly}.

At this stage, we have all coefficients of
$\charpoly_{W_k}(u^\star_k)$ in $W_k$. It remains to apply the map
$\chi_k$ to each of them to recover $\charpoly(u)$.

Elements of $W_k=\F_q[t,\var]/(\ell(t),(\var-\gamma_x)^k)$ are written
as bivariate polynomials in $t,\var$, with degree less than $n$ in $t$
and less than $k$ in $\var$. To compute their image through $\chi_k$,
we first apply the isomorphisms
\begin{align*}
  W_k= \F_q[t,\var]/(\ell(t),(\var-\gamma_x)^k)\nonumber&\xrightarrow{A_k} \F_q[x,\vart,\var]/(\frakp(x),g(x,\vart),(\var-x)^k)\nonumber\\
  &\xrightarrow{B_k} \F_q[\var,\vart]/(\frakp(\var)^k,G_k(\var,\vart))
\end{align*}
from~\eqref{eq:Wk}, with $\frakp(\var)^k$ of degree $km$ and $G_k$ of
degree $n/m$ in $t$.

We mentioned in Section \ref{sec:prelim} that for $c$ in
$\L=\F_q[t]/(\ell(t))$, we can compute its image $\alpha(c)$ in
$\F_q[x,\vart]/(\frakp(x),g(x,\vart)$ using $(n \log q)^{1+o(1)}$ bit
operations. Proceedings coefficient-wise with respect to $\var$, this
shows that for $C$ in $W_k$, we can compute $A_k(C)$ in $(k n \log
q)^{1+o(1)}$ bit operations.

The \textit{tangling} map of ~\cite[\S 4.5]{powermod} provides an
algorithm for computing the isomorphism $\F_q[x, y]/(\frakp(x),
(\var-x)^k) \to \F_q[\var]/(\frakp(\var)^k)$ in $(km \log q)^{1+o(1)}$
bit operations (this could also be done through modular composition,
with a similar asymptotic runtime, but the algorithm
in~\cite{powermod} is simpler and faster). Applying it
coefficient-wise with respect to $t$, this allows us to compute
$B_k(A_k(C))$ in $(k n \log q)^{1+o(1)}$ bit operations again.  At
this stage, the mapping $\chi_k$ is simply extraction of the degree-0
coefficient in $t$.

We apply this procedure $r$ times, for a total cost of $(r k n \log
q)^{1+o(1)}$ bit operations. This can be neglected in the runtime
analysis.

When using precision $k = 1$ for the prime field case, for $u=\tau^n$,
it is necessary to compute the constant coefficient $a_0$ separately.
This is done using the formula $a_0 = (-1)^{n(r+1) +
  r}N_{\L/\F_q}(\gamma_{\Delta_r})^{-1} \frakp$ from \cite{garaipap}
and takes $(n \log q)^{1 + o(1)}$ bit operations.

Summing the costs seen so far for the various steps of the algorithm
finishes the proof of our main theorems.


\subsection{Example}

  Let $\F_q = \Z/2\Z$, $n = 3$ and set $\ell(t) = t^3 + t + 1$ and $\L
  = \F_2[t]/(\ell(t))$. Let $\gamma_x = t + 1 \bmod \ell(t)$, so that
  \[\frakp = x^3 + x^2 + 1 = \ell(x+1),\]
  and $\L \cong \F_{\frakp}=\F_q[x]/(\frakp(x))$, with the isomorphism
  given by $f(t) \mapsto f(x+1)$.
  Consider the rank 4 Drinfeld module $\phi_x = t \tau^4 + (t^2 +
  t)\tau^3 + \tau^2 + t^2 \tau + t + 1$. We proceed to compute the
  characteristic polynomial using the de Rham cohomology, that is,
  crystalline cohomology truncated in degree $k=1$. In other words,
  all computations are done over $\L$

  The recurrence of equation (\ref{mainrec}) becomes $ \tau^{k + 4} =
  (t + 1)^{2^k}\tau^{(k + 3)} + (t^2 + 1)^{2^k}\tau^{k + 2} +
  t^{2^k}\tau^{k + 1} + (1 + t^{1 -2^k})\tau^{k} $. Running the
  recurrence for $n = 3$ iterations gives:
  \begin{itemize}
  \item $\tau^{5} = (t^2 + 1)\tau^{4} +  (t^2 + t + 1)\tau^{3} + t^2 \tau^{2} + t^2 \tau^{1}$
  \item $\tau^{6} = (t^2 + 1)\tau^{4} +  (t^2 + 1)\tau^{3} + (t^2 + t) \tau^{2} + \tau^{1}$
  \item $\tau^{7} = \tau^{4} +  t\tau^{3} + (t + 1) \tau^{2} + \tau^{1}$
\end{itemize}
  A matrix for the Frobenius endomorphism 
  can be inferred to be
  \begin{center}
$ \begin{bmatrix}
          1     &      t   &    t + 1  &         1 \\
    t^2 + 1  &   t^2 + 1   &  t^2 + t &          1 \\
    t^2 + 1 & t^2 + t + 1    &     t^2    &     t^2 \\
          1      &     0     &      0    &       0 \\
\end{bmatrix} . $

\end{center}
It has characteristic polynomial $Z^4 + (t + 1)Z^2 + (t + 1)Z$. Using
the expression for $a_0$ which is valid in the prime field case, the
Frobenius norm can be inferred to be $a_0 = x^3 + x^2 + 1$.

To recover the final coefficients, observe that $t \mapsto x + 1$
gives the required map $\chi_1 : W_1 = \L \to
\F_{\frakp}$. Finally, we conclude that the characteristic polynomial
of $\tau^n$ is $Z^4 + xZ^2 + xZ + x^3 + x^2 + 1$.


\section{Experimental Results}

An implementation of the algorithm of section (\ref{ssec:frobenius}) was created in
SageMath~\cite{sagemath} and is publicly available at
\url{https://github.com/ymusleh/drinfeld-module}. An implementation in
MAGMA~\cite{MR1484478} is also publicly available at
\url{https://github.com/ymusleh/drinfeld-magma} and was used to
generate the experimental results included in this work. Our
implementation differs from our theoretical version in a few ways.
\begin{itemize}
\item The Kedlaya-Umans algorithm is most likely not used by MAGMA
  for computing Frobenius mappings of elements of~$\L$.
\item To compute the images of coefficients under the map $\chi_k$, we
  leverage a simpler procedure using reduction modulo bivariate
  Gr\"obner bases, rather than the tangling map of van der Hoeven and
  Lecerf. In any case, this does not impact the run times presented.
\end{itemize}

\begin{acks}
  We thank Xavier Caruso, Antoine Leudi\`ere and Pierre-Jean
  Spaenlehauer for interesting discussions. Schost is supported by an
  NSERC Discovery Grant.
\end{acks}

\begin{table}[h!]
\resizebox{\columnwidth}{!}{%
\begin{tabular}{|llllllll|}
\hline
\multicolumn{8}{|c|}{Run Times for $m = 10$  $q = 25$ in seconds}                                                                                                                                                                                                                                                       \\ \hline
\multicolumn{1}{|l|}{\textbf{}}       & \multicolumn{1}{l|}{$n = 100$} & \multicolumn{1}{l|}{$n = 150$} & \multicolumn{1}{l|}{$n = 200$} & \multicolumn{1}{l|}{$n = 300$} & \multicolumn{1}{l|}{$n = 400$} & \multicolumn{1}{l|}{$n = 500$} & $n = 600$ \\ \hline
\multicolumn{1}{|l|}{$r = 5$}  & \multicolumn{1}{l|}{0.400}                & \multicolumn{1}{l|}{2.260}                 & \multicolumn{1}{l|}{42.190}                 & \multicolumn{1}{l|}{86.830}                 & \multicolumn{1}{l|}{269.760}                 & \multicolumn{1}{l|}{635.170}                 & 1099.110               \\ \hline
\multicolumn{1}{|l|}{$r = 9$}  & \multicolumn{1}{l|}{0.790}                 & \multicolumn{1}{l|}{4.210}                 & \multicolumn{1}{l|}{78.860}                 & \multicolumn{1}{l|}{157.100}                 & \multicolumn{1}{l|}{481.090}                 & \multicolumn{1}{l|}{1129.670}                 &                  \\ \hline
\multicolumn{1}{|l|}{$r = 12$} & \multicolumn{1}{l|}{1.170}                 & \multicolumn{1}{l|}{6.080}                 & \multicolumn{1}{l|}{104.630}                 & \multicolumn{1}{l|}{220.430}                 & \multicolumn{1}{l|}{658.950}                 & \multicolumn{1}{l|}{1531.580}                 &                  \\ \hline
\multicolumn{1}{|l|}{$r = 18$} & \multicolumn{1}{l|}{2.300}                 & \multicolumn{1}{l|}{11.360}                 & \multicolumn{1}{l|}{170.790}                 & \multicolumn{1}{l|}{366.690}                 & \multicolumn{1}{l|}{1074.840}                 & \multicolumn{1}{l|}{2451.530}                 &                  \\ \hline
\multicolumn{1}{|l|}{$r = 23$} & \multicolumn{1}{l|}{3.820}                 & \multicolumn{1}{l|}{17.580}                 & \multicolumn{1}{l|}{240.100}                 & \multicolumn{1}{l|}{525.670}                 & \multicolumn{1}{l|}{1518.370}                 & \multicolumn{1}{l|}{}                 &              \\ \hline
\end{tabular}
}
\end{table}

\FloatBarrier

\bibliographystyle{ACM-Reference-Format}
\bibliography{drinfeld}
\end{document}